\def\rainfty{\rightarrow\infty}
\def\bN{\mathbb{N}}
\def\cQ{\mathbb{Q}}
\def\cE{\mathcal{E}}
\def\cG{\mathcal{G}}
\def\cJ{\mathcal{J}}
\def\cN{\mathcal{N}}
\def\cQ {\mathcal{Q}}
\def\cV{\mathcal{V}}
\def\bee{\begin{equation}}
\def\ene{\end{equation}}
\def\beq{\begin{eqnarray}}
\def\enq{\end{eqnarray}}
\newtheorem{ass}{Assumption}
\newtheorem{lem}{Lemma}
\newtheorem{rem}{Remark}
\newtheorem{cor}{Corollary}
\newtheorem{thm}{Theorem}
\newtheorem{exmp}{Example}
\newtheorem{defi}{Definition}
\begin{document}
\title{How to Stop Consensus Algorithms, {\em locally}?}
\author{Pei Xie, Keyou You and Cheng Wu \thanks{This work was supported by the National Natural Science Foundation of China (61304038),  and Tsinghua University Initiative Scientific Research Program.} \thanks{P. Xie, K. You, and C. Wu are with the Department of Automation and TNList, Tsinghua University, 100084, China (emails: xie-p13@mails.tsinghua.edu.cn, \{youky, wuc\}@tsinghua.edu.cn).}}

\maketitle
\begin{abstract}
This paper studies problems on {\em locally} stopping distributed consensus algorithms over networks where each node updates its state by interacting with its neighbors and decides by itself whether certain level of agreement has been achieved among nodes. Since an individual node is unable to access the states of those beyond its neighbors, this problem becomes challenging.  In this work, we first define the stopping problem for generic distributed algorithms.  Then, a distributed algorithm is explicitly provided for each node to stop consensus updating by exploring the relationship between the so-called {\em local} and {\em global} consensus. Finally, we show both in theory and simulation that its effectiveness depends both on the network size and the structure. 
\end{abstract}
\begin{IEEEkeywords} distributed algorithms, strongly connected graphs, local stopping, consensus.\end{IEEEkeywords}

\section{Introduction}
This work proposes local stopping rules for distributed algorithms with a network of computing nodes, and each node updates its states by only interacting with neighbors. Due to the advantages of cooperation among nodes,  they have been widely used in many areas, including distributed localization \cite{sheu2008distributed}, cognitive networks \cite{bazerque2010distributed}, cooperative control of unmanned air vehicles (UVAs) \cite{richards2002coordination}, flocking \cite{dong2016flocking}, and social networks \cite{acemoglu2011opinion}. Since optimality of many distributed algorithms is attained in an asymptotic sense, it is critical for each node of the algorithm to decide {\em by itself} when to stop updating, which to the best of our knowledge, has not be well studied in the literature. This problem is substantially different from that of a centralized algorithm as it lacks a centralized coordinator to monitor the network.

Most of existing stopping rules are based on that each node will stop updating after an adequately large number of iterations. Clearly, this idea is very easy to implement, but also suffers several limitations. Firstly,  the number of iterations depends on quite a few factors, e.g., the network topology, the initial network state, and the convergence rate of distributed algorithms, all of which essentially require {\em global} information of the system. For instance, the convergence rate of the celebrated consensus algorithm is quantified by the algebraic connectivity of the interaction graph and the initial network state, both of which are difficult to evaluate in a distributed way and also require much more computational and communication overheads.  Secondly, the iteration number is always derived from the worst-case point of view, and its conservativeness is case-by-case dependent, which is unable to evaluate. The last but not the least,  the number of iterations is impossible to obtain in advance for  time-varying networks, especially for stochastically time-varying networks due to the causality issue.

In this work, we formally define the local stopping problem of a distributed algorithm over networks under the following constraints: (i) each node is only able to access information from its neighbors. This enables the rule  scalable with the network size. (ii) To save storage capacity, it is not allowed for nodes to store ``large" data, e.g., the historical state information. (iii) the rule can be easily checked by an individual node, and (iv) the communication overhead to support the rule is as small as possible. 

Then, we propose a fully distributed stopping scheme for general distributed algorithms by allowing each individual node in the network to maintain a vector, which is updated iteratively by using  information from its in-neighbors to gradually ``learn'' the network. Each node will stop updating based on this auxiliary vector.  

For distributed consensus algorithms, the earliest concern about this topic has been addressed in  \cite{yadav2007distributed}, which requires the nodes to simultaneously run minimum and maximum protocols along with the consensus protocol, and both the maximum and minimum protocols are re-initialized per $D$-iterations where $D$ is the diameter of the graph. The drawback of the method is  that it requires to communicate three different states of the same dimension at each time slot. This increases the communication burden, especially when the state of a node is a multi-dimensional vector. Moreover, the synchronization problem in re-initializing the minimum and maximum protocols remains difficult in practice. To overcome these issues, an adaptive stopping-recalling mechanism is designed in \cite{manitara2014distributed} to stop the push-sum consensus algorithm in finite time. The method permits the ``inactive'' nodes to recover transmitting again, but the problem is that a node never know whether it will reactivate or not even though the whole network has reached consensus, and has to await orders all the time.

In fact, this paper combines the advantages of the above two methods. We adopt the idea of minimum-consensus \cite{cortes2008distributed} to check the {\em uniformly local consensus} to verify the global consensus. Different from that in \cite{yadav2007distributed}, our method needs not to reset minimum-consensus protocol, and the state of the minimum consensus  is  a discrete-valued scalar even if each node maintains a vector state. To evaluate the performance of our method, we characterize the extra time that the nodes spend to locally detect the global consensus.  
 
An outline of this paper is as follows. In Section \ref{formulation}, we propose the local stopping problem in distributed algorithms. In Section \ref{consensus}, a particular local stopping rule for the consensus algorithm is designed. In Section \ref{simulation}, we analyze the sensitivity of the method. Finally, some concluding remarks are drawn in Section \ref{conclusion}.

\textbf{Notation: }Consider a directed graph (digraph) $\mathcal{G}=(\mathcal{V},\mathcal{E})$, where $\mathcal{V}$ is the set of nodes and $\mathcal{E}$ is the set of directed edges $(i,j)$ with $i,j\in\mathcal{V}$. For node $i\in\mathcal{V}$, its {\em in-neighbors} and {\em out-neighbors} are respectively defined by $\mathcal{N}_i^{-}:=\{j\in\mathcal{V}|(i,j)\in\mathcal{E}\}$ and $\mathcal{N}_i^{+}:=\{j\in\mathcal{V}|(j,i)\in\mathcal{E}\}$. For two nodes $i_1,i_j\in\mathcal{V}$, the {\em directed path} from $i_j$ to $i_1$ is defined by $\{(i_1,i_2),\ldots,(i_{j-1},i_j)\}$ with $(i_k,i_{k+1})\in\mathcal{E}$ for $k\in[1, j-1]$. We denote $d_{ij}$ as the {\em distance} from $j$ to $i$, i.e., the length of the shortest path from $j$ to $i$. The digraph is said to contain a {\em spanning tree} if it has a {\em root} node that is connected to any other node in the graph via a directed path, and is {\em strongly connected} if each node is connected to every other node in the graph via a directed path. The diameter of the strongly connected digraph $\mathcal{G}$ is defined by $D=\max_{i,j\in\mathcal{V}} d_{ij}$. Finally, we denote $x_A\triangleq\{x_i|i\in A\}$.

\section{Local Stopping Scheme for Networked Distributed Algorithms}
\label{formulation}
\subsection{Distributed Algorithms}
Let $x_i(k)$ be the state of node $i$ at time slot $k$ in a fixed graph\footnote{It can be easily extended to time-varying graphs.} $\cG=\{\cV,\cE\}$. We are concerned with distributed algorithms of the following form
\begin{equation}
\label{basic_dis}
x_i(k+1)=h_i(x_{\mathcal{N}_i^-\cup\{i\}}(k), k),
\end{equation}
where $h_i(\cdot, k)$ is a function of $x_i(k)$ and $x_{\mathcal{N}_i^-}(k)$, both of which are directly known to node $i$ at time slot $k$. In different applications, $h_i(\cdot, k)$ may take different forms. Clearly, the update law in (\ref{basic_dis}) is very generic, and contains a lot of important algorithms.  Here we present three interesting applications for an illustration.
\begin{exmp}[Consensus algorithm] \label{consensus}In discrete consensus algorithms \cite{olfati2004consensus}, $h_i(\cdot, k)$ is simply a convex combination of $x_i$ and $x_{\mathcal{N}_i^-}$ , i.e., \begin{equation}
\label{consensus_dis}
x_i(k+1)=\sum_{j\in\mathcal{N}_i^-\cup\{i\}}\alpha_{ij} x_j(k),
\end{equation}
where $\alpha_{ij}>0$ if and only if $j\in\mathcal{N}_i^-\cup\{i\}$, and zero, otherwise. Moreover, $\sum_{j\in\mathcal{N}_i^-\cup\{i\}}\alpha_{ij}=1$. Then, the network state asymptotically converges to consensus (i.e., $\lim_{k\rainfty}\|x_i(k)-x_j(k)\|=0$) if and only if $\cG$ contains a spanning tree \cite{ren2005consensus}.
\end{exmp}
\begin{exmp}[Opinion dynamics] In Friedkin-Johnsen model of opinions evolution \cite{parsegov2016novel}, the state of node $i$ is updated as
\begin{equation}
x_i(k+1)=\omega_i \sum_{j\in\mathcal{N}_i^-\cup\{i\}}\alpha_{ij} x_j(k)+(1-\omega_i)x_i(0),
\end{equation}
where $0\le \omega_i\le 1$ represents the ``coupling condition" and $x_i(0)$ is referred to as the node prejudice. Under certain conditions, the network state asymptotically converges \cite{parsegov2016novel}. 
\end{exmp}

\begin{exmp}[Distributed gradient descend] The aim of the distributed gradient descend (DGD) is to solve the following optimization problem
\bee
\min\sum_{i\in\cV} f_i(x),\label{sp}
\ene
where $f_i(\cdot)$ is a convex function only known by node $i$. In the DGD algorithm, $h_i(\cdot)$  is given as follows
\begin{equation}
x_i(k+1)=\sum_{j\in\mathcal{N}_i^-\cup\{i\}}\alpha_{ij} x_j(k)-\zeta(k)\cdot d_i(k),
\end{equation}
where $\zeta(k)$ is a sequence of appropriately selected stepsizes, $d_i(k)$ is a (sub)-gradient of $f_i$ at $x_i(k)$ and $\alpha_{ij}$ is given as in (\ref{consensus_dis}). By \cite{nedic2009distributed}, it follows that if $\cG$ is balanced and strongly connected, the state of each node asymptotically converges to some common optimal point of the optimization problem (\ref{sp}).
\end{exmp}

In the above important applications, the convergence of the network state is shown in the asymptotic sense.  Then, a natural question is when to stop updating (\ref{basic_dis}) for an individual node  to achieve a desired level of ``optimality"?  If there were a central authority to monitor the network state, the problem is obviously trivial. However, in distributed algorithms over networks, an individual node is unable to directly observe the whole network state, which is quite different from the centralized algorithms, and the local stopping problem requires further investigation. 

A naive idea is that the node can simply stop updating (\ref{basic_dis}) after a sufficiently large number of iterations. Indeed, this is very easy to implement. However, to ensure a desired quality of the node state, the number of iterations usually depends on quite a few factors,  most of which are global information of the system. Take the consensus algorithm in (\ref{consensus_dis})
as an example. Let $d(k)=\max_{i,j\in\cV}\{\|x_i(k)-x_j(k)\|_{\infty}\}$, it follows from \cite{olfati2004consensus} that $d(k)\le c \rho^k d(0)$ where $c>0$ is positive and independent of network. The decaying rate $\rho$ is the second largest eigenvalue for a symmetric stochastic matrix $A=(\alpha_{ij})$, 
and strictly less than one if the undirected graph $\cG$ is connected.  To ensure $d(k)\le \epsilon$, the number of iterations is usually set to satisfy that 
 \bee
 k\ge\big\lceil\frac{\log\epsilon/(cd(0))}{\log \rho}\big\rceil.
 \ene

Clearly, $d(0)$ and $\rho$ in the lower bound respectively depend on the initial network state and the network structure, both of which are unavailable to any individual node.  One may argue that $d(0)$ can be obtained by using distributed maximum- and minimum-consensus algorithms.  However, distributedly estimating $\rho$ is nontrivial \cite{di2014distributed} and again requires an additional local stopping rule. More importantly, the lower bound is only obtainable for fixed graphs, and there does not exist such a $\rho$ for time-varying graphs. Another idea is to use ``fast" communications between the time slots $k$ and $k+1$ to {\em distributedly} evaluate $d(k)$. Although this is possible with a finite number of communications, it certainly increases the communication overhead, and even offset the benefit of distributed algorithms.

Overall, the problem of designing effective rules to locally stop distributed algorithms in (\ref{basic_dis}) largely remains open, which is the focus of this work. 
\subsection{Local Stopping Scheme}
We are interested in designing local stopping rules of the following form
\begin{equation}\label{observer}
s_i(k+1) = \mathcal{Q}_i(x_{\mathcal{N}_i^-\cup\{i\}}(k),s_{\mathcal{N}_i^-\cup\{i\}}(k)),
\end{equation}
where $s_i(k)$ is embeded  in node $i$ to monitor iterations of distributed algorithms of (\ref{basic_dis}).  If a given performance of (\ref{basic_dis}) is achieved, i.e., there exists a performance dependent function $\cJ_i(\cdot)$ such that
\begin{equation}
\mathcal{J}_i(s_i(k))\geq 0,\label{rule}
\end{equation}
then node $i$ stops updating and will no longer broadcast any information to its out-neighbors. Clearly, $s_i(k)$ can be distributedly implemented as in (\ref{basic_dis}), and synchronize with the original distributed algorithm, see Fig. \ref{scheme} for an illustration of implementing a local stopping rule in node $i$.
\begin{figure}[htp]
\centering
\includegraphics[width=7.4cm]{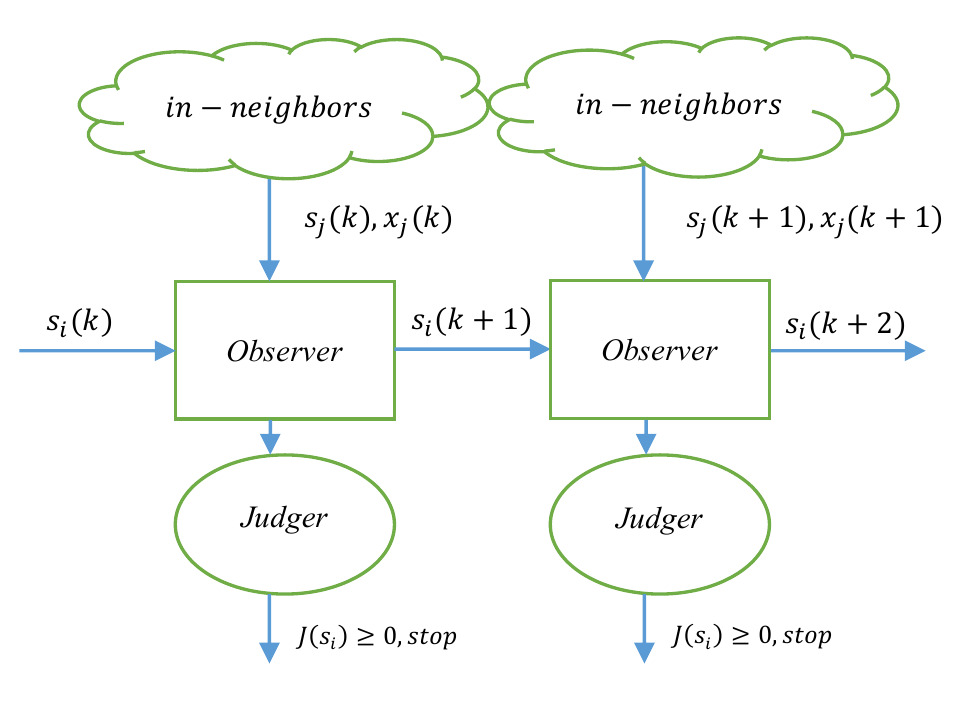}
\caption{Local stopping rule in node $i$.}
\label{scheme}
\end{figure}

Since $s_i(k)$ needs to be broadcast, the extra communication overhead is now decided by $s_i(k)$.  The number of bits to represent $s_i(k)$ should be as small as possible. While to ensure fast detection, it is highly desirable to easily compute both $\cQ_i(\cdot)$ and $\cJ_i(\cdot)$. Usually, the more information we obtain from the network topology, the simper form of these two functions are. Moreover,  the stopping rule of (\ref{observer}) and (\ref{rule}) needs to be sensitive, meaning that it is able to quickly detect the attainability of the performance quality of distributed algorithms.

In the sequel, we design a local stopping rule of the form (\ref{observer}) and (\ref{rule}) for the distributed consensus algorithm in (\ref{consensus_dis}).  The case of generic distributed algorithms will be studied in the full version of this work.

\section{Local Stopping Rules for Consensus Algorithms}
\label{consensus}
\subsection{Notions of $\epsilon$-consensus}
Obviously, the consensus algorithm in (\ref{consensus_dis}) can be written in the following matrix form
\begin{equation}
\label{systema}
x(k+1)=Ax(k),
\end{equation}
where $A=(\alpha_{ij})\in\mathbb{R}^{N\times N}$ is {\em row stochastic}. It follows from Gershgorin's disc theorem \cite{horn2012matrix} that $A^k\rightarrow1_Nv^T$ with $1_N^Tv=1$ if one is the simple eigenvalue of $A$, which implies that $x(k)\rightarrow 1_Nv^Tx(0)$, i.e., $|x_i(k)-x_j(k)|\rightarrow0$ and consensus is asymptotically achieved. By \cite{ren2005consensus},  one is the simple eigenvalue of $A$ if and only if $\cG$ associated with $A$ contains a spanning tree.  To examine the asymptotic behavior of (\ref{systema}), we introduce the concept of global $\epsilon$-consensus.

\begin{defi}[global $\epsilon$-consensus] Given $\epsilon>0$, the state vector $x=\left[x_1,\ldots,x_N\right]$ of the network is said to reach global $\epsilon$-consensus if $|x_i-x_j|<\epsilon$ for any $ i,j\in\cV$.
\end{defi}

The main objective of this paper is to provide a local stopping rule of the form (\ref{observer}) and (\ref{rule}) to ensure that the global $\epsilon$-consensus is achieved.  However, an individual node cannot directly check its attainability. Instead, by directly using states of its in-neighbors and without inducing extra communication, each node can easily check {\em local} $\epsilon$-consensus below.
 \begin{defi}[local $\epsilon$-consensus] Given $\epsilon>0$, if the state of node $i$ satisfies $max_{j\in\mathcal{N}_i^-}|x_i-x_j|<\epsilon$, node $i$ is said to achieve local  $\epsilon$-consensus.\end{defi}
A striking distinction between local $\epsilon$-consensus and global $\epsilon$-consensus lies in that the later one is attractive, meaning that once the network attains global $\epsilon$-consensus, it will stay in this state afterwards.  To elaborate it, let $x_u(k)=\max_i\{x_i(k)\}$, and $x_l(k)=\min_i\{x_i(k)\}$. It follows from (\ref{consensus_dis}) that $ x_u(k)\ge x_i(k+1)$ and  $x_j(k+1)
\ge x_l(k)$ for all $i,j\in\cV$. Then $ |x_i(k+1)-x_j(k+1)|\le  x_u(k)-x_l(k)=\max_{i,j}|x_i(k)-x_j(k)|$. That is, once global $\epsilon$-consensus is achieved at time slot $k$, it will continue to hold at next time slot.

 While  global $\epsilon$-consensus implies local $\epsilon$-consensus, the converse does not hold. To bridge the gap of the above two notions of $\epsilon$-consensus,  we further introduce {\em uniformly local $\epsilon$-consensus}.

\begin{defi}[uniformly local $\epsilon$-consensus]
Given $\epsilon>0$,  if all the nodes achieve local $\epsilon$-consensus, the network achieves uniform $\epsilon$-consensus.
\end{defi}

Again, global $\epsilon$-consensus implies uniformly local $\epsilon$-consensus. On the other side, we obtain the following result. 
\begin{lem}\label{equivalence} Suppose that $\cG$ is strongly connected. If the state of the network is uniformly local $\epsilon$-consensus, it is also in the state of global $(\epsilon D)$-consensus, where $D$ is the diameter of the network.
\end{lem}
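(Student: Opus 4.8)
The plan is to turn the global bound into a telescoping sum of per-edge bounds along a shortest directed path. First I would restate the hypothesis edge-by-edge. Uniformly local $\epsilon$-consensus means every node $i$ achieves local $\epsilon$-consensus, i.e. $|x_i-x_j|<\epsilon$ for every in-neighbor $j\in\mathcal{N}_i^-$, which by the definition $\mathcal{N}_i^-=\{j:(i,j)\in\mathcal{E}\}$ is the same as saying that every directed edge $(i,j)\in\mathcal{E}$ joins two states differing by less than $\epsilon$. The crucial observation is that $|x_i-x_j|=|x_j-x_i|$, so the orientation of an edge is irrelevant to this scalar bound; this is exactly what lets us walk along a directed path and accumulate the differences regardless of edge direction.

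Next I would fix an arbitrary pair $p,q\in\mathcal{V}$ and invoke strong connectivity to obtain a directed path from $q$ to $p$, choosing a shortest one $i_1=p,\,i_2,\ldots,i_m=q$ with $(i_k,i_{k+1})\in\mathcal{E}$ for $k\in[1,m-1]$ and $m-1=d_{pq}\le D$. Applying the per-edge bound to each consecutive pair gives $|x_{i_k}-x_{i_{k+1}}|<\epsilon$, and the triangle inequality then telescopes these into
\begin{equation}
|x_p-x_q|\le\sum_{k=1}^{m-1}|x_{i_k}-x_{i_{k+1}}|<(m-1)\epsilon=d_{pq}\,\epsilon\le D\epsilon.
\end{equation}
Since a finite sum of strictly-sub-$\epsilon$ terms is strictly less than the corresponding sum of bounds, the inequality is strict, and as $p,q$ were arbitrary we conclude $|x_i-x_j|<\epsilon D$ for all $i,j\in\mathcal{V}$, which is precisely global $(\epsilon D)$-consensus.

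The argument is elementary and I expect no genuine obstacle, only careful bookkeeping. The single delicate point is the direction convention: I must confirm that each edge $(i_k,i_{k+1})$ on the path makes $i_{k+1}$ an in-neighbor of $i_k$, so that local $\epsilon$-consensus \emph{at} $i_k$ is what certifies $|x_{i_k}-x_{i_{k+1}}|<\epsilon$. With the stated definition of $\mathcal{N}_i^-$ this holds automatically, but it is the only place the hypothesis could be misapplied, so I would make it explicit. I would also note that strong connectivity is used exactly once, to guarantee a finite directed path of length at most $D$ between every ordered pair, which is why $D$ rather than a larger quantity appears in the bound.
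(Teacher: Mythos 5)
Your proof is correct and follows essentially the same route as the paper's: strong connectivity yields a directed path of length at most $D$ between any two nodes, each edge contributes a difference below $\epsilon$ by local $\epsilon$-consensus at its tail, and the triangle inequality telescopes these into the global $(\epsilon D)$ bound. Your extra care about the orientation convention for $\mathcal{N}_i^-$ is a worthwhile clarification but does not change the argument.
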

\begin{proof} Given two arbitrary nodes $i,j\in\mathcal{V}$, then there exists a directed path with edges $(i,i_1)\in\cE, (i_1,i_2)\in\cE,\ldots, (i_m,j)\in\cE$ and $m+1\leq D$. For convenience, denote $i_0=i$ and $i_{m+1}=j$. Since the network is in uniformly local $\epsilon$-consensus, then $|x_{i_k}-x_{i_{k+1}}|<\epsilon$ holds for $k=0,\ldots,m$. Therefore, $|x_i-x_j|=|\sum_{k=0}^m(x_{i_k}-x_{i_{k+1}})|\leq\sum_{k=0}^m|x_{i_k}-x_{x_{k+1}}|<(m+1)\epsilon\leq\epsilon D$, which indicates that the network is in  $(\epsilon D)$-consensus.
\end{proof}

If $\cG$ is strongly connected, the above shows that uniformly local $\epsilon$-consensus is essentially equivalent to global $\epsilon$-consensus. 
Then, the remaining problem is how to locally check the attainability of uniformly local $\epsilon$-consensus. One may consider a straightforward idea by using a certain number of consecutively attaining local $\epsilon$-consensus. However, this number cannot be settled as a prior and in fact depends on the initial network state, which is illustrated in the following example.

\begin{exmp} Consider a strongly connected ring network with $3$ nodes. The associated weighting matrix $A$ and initial network state are given by
\begin{equation}
A=\left[\begin{matrix}0.5&0.5&0\\
     0&0.999&0.001\\
     0.5&0&0.5
\end{matrix}\right], ~\text{and}~ x(0)=\left[\begin{matrix}0\\0\\100\end{matrix}\right].\nonumber
\end{equation}

Once can readily verify that $x(3)=[0.1, 0.1748, 12.525]^T$ and node $1$ attains local $0.5$-consensus at any time slot. However,  either uniformly local $0.5$-consensus or global $0.5$-consensus is not achieved before time slot $k\leq 3$.  It is easy to observe that the attainability of uniformly local $\epsilon$-consensus depends on the initial state vector, and cannot be simply checked via a fixed number of consecutively attaining local $\epsilon$-consensus.  

\end{exmp}

In the sequel, we shall make the following assumpton. 
\begin{ass} \label{connect} The graph $\cG$ is strongly connected.
\end{ass}

In fact, Assumption \ref{connect} is necessary. If $\cG$ is not strongly connected, there exists a node $i$ that is not reachable from node $j$. Then, node $i$ is unable to receive information from node $j$, which renders node $i$ generically unable to locally decide the stopping time. 

\subsection{Design of Local Stopping Methods}
To resolve the above mentioned issue for uniformly local $\epsilon$-consensus, our idea is to adopt a minimum-consensus algorithm in the form (\ref{observer}) so that  each node can track the minimum number of consecutive attaining local $\epsilon$-consensus. 

To this purpose, each node $i$ uses $y_i^{\epsilon}(k)$ to record the {\em latest} number of  consecutively attaining local $\epsilon$-consensus in the time slot $k$, i.e.,  
\beq
&&\hspace{-1cm}y_i^{\epsilon}(0)=0,\\
&&\hspace{-1cm}y_i^{\epsilon}(k+1) = 
\begin{cases}
y_i^{\epsilon}(k)+1,&\max_{j\in\mathcal{N}_i^-}|x_i(k)-x_j(k)|<\epsilon,\\
0,&\text{otherwise.}
\end{cases}\notag
\label{af}
\enq

Clearly, if $y_i^{\epsilon}(k)$ is positive,  node $i$ attains local $\epsilon$-consensus at time slot $k-1$.  If node $i$ is further able to locally check the positiveness of $\min_{i\in\cV}\{y_i^{\epsilon}(k)\}$, the problem is solved as $\min_{i\in\cV}\{y_i^{\epsilon}(k)\}>0$ implies that uniformly local $\epsilon$-consensus is achieved! To this end, each node may use ``fast" communications between time slot $k$ and $k+1$ to send one-bit data for $D$ times to its out-neighbors, and adopts the minimum-consensus algorithm, i.e.,
\bee
s_i(k,0)=\text{sign}(y_i^{\epsilon}(k)), ~\text{and}~s_i(k,m+1)=\min_{j\in\mathcal{N}_i^-\cup\{i\}}s_j(k,m),\label{min-consensu}
\ene 
where $\text{sign}(x)=1$ if $x>0$, and zero, otherwise.  

Specifically, node $i$ only needs to broadcast one-bit data to its out-neighbors. If $\cG$ is strongly connected, it is clear that  $s_i(k,D)=\min_{i\in\cV}\{s_i(k,0)\}$. That is, the minimum-consensus is achieved after $D$ rounds of communications. If $s_i(k,D)>0$, then node $i$ stops updating (\ref{consensus_dis})
at time slot $k$, and reports that the network attains uniformly local $\epsilon$-consensus. However, this approach is not applicable to time-varying graphs. For example, if $\cG(k)$ is not strongly connected, the minimum consensus is not achievable by using (\ref{min-consensu}).

Inspired also motivated by its limitation, we introduce another variable $z_i^\epsilon$  in node $i$, which plays a similar role of $s_i(k,m)$ of (\ref{min-consensu}).  Let $z_i^\epsilon(0)=0$, and  
\bee\label{auxz}
z_i^\epsilon(k+1) = \min_{j\in\mathcal{N}_i^-\cup\{i\}}{\{z_j^\epsilon(k),y_j^\epsilon(k)\}}+1.
\ene

Clearly, both the auxiliary variables $y_i^\epsilon$ in (\ref{af}) and $z_i^\epsilon$ in (\ref{auxz}) can be computed in a distributed way. 

Next, we show how to use $z_i^\epsilon$ to check the attainability of uniformly local $\epsilon$-consensus. The following result is straightforward, and its proof is omitted. 

\begin{lem}\label{iter_prop} Consider the distributed algorithms given by (\ref{af}) and (\ref{auxz}), the following statements hold.
\begin{enumerate}\renewcommand{\labelenumi}{\rm(\alph{enumi})}
\item For any $k_1\geq k_2\geq0$, then $y_i^\epsilon(k_1)\leq y_i^\epsilon(k_2)+k_1-k_2$ and $z_i^\epsilon(k_1)\leq z_i^\epsilon(k_2)+k_1-k_2$.
\item If $(i,j)\in\mathcal{E}$, then $y_j^\epsilon(k)\geq z_i^\epsilon(k+1)-1$. 
\item The network is uniformly local $\epsilon$-consensus at time slot $k$ if and only if $y_i^\epsilon(k+1)\geq1$ for all $i\in\mathcal{V}$.
\end{enumerate}
\end{lem}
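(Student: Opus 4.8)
The plan is to verify each claim directly from the update rules (\ref{af}) and (\ref{auxz}), since each reduces to a one-step inequality that is then iterated. The key observation for the growth bounds is that both auxiliary variables can increase by at most one per time slot.

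For part (a), I would first establish the single-step bounds $y_i^\epsilon(k+1)\leq y_i^\epsilon(k)+1$ and $z_i^\epsilon(k+1)\leq z_i^\epsilon(k)+1$. The bound on $y$ holds because the update in (\ref{af}) either increments by one or resets to zero, and $y_i^\epsilon(k)\geq 0$ makes both branches admissible. For $z$, retaining the index $j=i$ inside the minimum in (\ref{auxz}) gives $\min_{j\in\mathcal{N}_i^-\cup\{i\}}\{z_j^\epsilon(k),y_j^\epsilon(k)\}\leq z_i^\epsilon(k)$, so $z_i^\epsilon(k+1)\leq z_i^\epsilon(k)+1$. Telescoping each bound across the $k_1-k_2$ steps from $k_2$ to $k_1$ (a routine induction) then yields the stated inequalities.

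For part (b), since $(i,j)\in\mathcal{E}$ means $j\in\mathcal{N}_i^-$ by the convention $\mathcal{N}_i^-=\{j:(i,j)\in\mathcal{E}\}$, the index $j$ participates in the minimization defining $z_i^\epsilon(k+1)$ in (\ref{auxz}). Keeping only the $y_j^\epsilon(k)$ term of that minimum gives $z_i^\epsilon(k+1)\leq y_j^\epsilon(k)+1$, which rearranges to the claim. For part (c), I would unwind the definition of uniformly local $\epsilon$-consensus: the network is in this state at time slot $k$ exactly when every node $i$ satisfies $\max_{j\in\mathcal{N}_i^-}|x_i(k)-x_j(k)|<\epsilon$. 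By (\ref{af}), this condition for node $i$ is precisely the branch in which $y_i^\epsilon(k+1)=y_i^\epsilon(k)+1\geq 1$, while its failure forces $y_i^\epsilon(k+1)=0$. Hence the local condition at node $i$ is equivalent to $y_i^\epsilon(k+1)\geq 1$, and quantifying over all $i\in\mathcal{V}$ gives the equivalence.

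None of these steps is a genuine obstacle, consistent with the statement being flagged as straightforward. The only point requiring care is the index bookkeeping in part (b): one must confirm that $\mathcal{N}_i^-\cup\{i\}$ in (\ref{auxz}) actually contains $j$ under the stated convention for in-neighbors, so that $y_j^\epsilon(k)$ legitimately appears in the minimum and the one-sided bound is valid.
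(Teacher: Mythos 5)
Your proof is correct, and since the paper explicitly omits the proof of this lemma as ``straightforward,'' your direct verification from the update rules (\ref{af}) and (\ref{auxz}) --- one-step bounds telescoped for (a), dropping all but the $y_j^\epsilon(k)$ term of the minimum for (b), and matching the branch condition of (\ref{af}) to the definition of uniformly local $\epsilon$-consensus for (c) --- is exactly the routine argument the authors intend. No gaps; the index bookkeeping you flag in (b) is handled correctly under the paper's convention $\mathcal{N}_i^{-}=\{j:(i,j)\in\mathcal{E}\}$.
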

\begin{thm}\label{thm1}  Consider the distributed algorithms given by (\ref{af}) and (\ref{auxz}) under Assumption \ref{connect}. If there exists $i\in\mathcal{V}$ such that $z_i^\epsilon (k) \ge  D+1$,   the network attains global $(\epsilon D)$-consensus at time slot $k$ from any initial network state. 
\end{thm}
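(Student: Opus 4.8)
The plan is to reduce the claim to uniformly local $\epsilon$-consensus and then exploit the minimum structure of (\ref{auxz}) to back-propagate the single hypothesis $z_i^\epsilon(k)\ge D+1$ through the whole network. Since Lemma~\ref{equivalence} already converts uniformly local $\epsilon$-consensus into global $(\epsilon D)$-consensus, it suffices to produce a time slot no later than $k$ at which every node is in local $\epsilon$-consensus, and then carry this property forward to slot $k$ using the attractiveness established earlier, namely that $d(m):=\max_{p,q\in\cV}|x_p(m)-x_q(m)|$ is nonincreasing in $m$. I would first record, via Lemma~\ref{iter_prop}(a) and $z_i^\epsilon(0)=0$, that $z_i^\epsilon(k)\le k$; hence the hypothesis forces $k\ge D+1$ and all the negative time shifts appearing below remain nonnegative.

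The key observation I would isolate is that, because (\ref{auxz}) is a minimum, a large value of $z$ at one node forces large values of both $z$ and $y$ across its entire in-neighborhood one step earlier: if $z_w^\epsilon(s)\ge c$ with $c\ge 1$, then $\min_{l\in\cN_w^-\cup\{w\}}\{z_l^\epsilon(s-1),y_l^\epsilon(s-1)\}\ge c-1$, so that $z_l^\epsilon(s-1)\ge c-1$ and $y_l^\epsilon(s-1)\ge c-1$ for every $l\in\cN_w^-\cup\{w\}$. This is the mechanism that turns information about a single node into information about many nodes, and it is the step I expect to drive the whole argument.

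With this in hand I would prove by induction on $t\in\{0,1,\dots,D\}$ the statement $P(t)$: every node $w$ with $d_{iw}\le t$ satisfies $z_w^\epsilon(k-t)\ge D+1-t$, and (for $t\ge1$) also $y_w^\epsilon(k-t)\ge D+1-t$. The base case $P(0)$ is exactly the hypothesis. For the step, I apply the observation above to each node $w'$ in the distance-$t$ ball, whose bound $z_{w'}^\epsilon(k-t)\ge D+1-t\ge1$ is available by $P(t)$; this yields $z_l^\epsilon(k-t-1)\ge D-t$ and $y_l^\epsilon(k-t-1)\ge D-t$ for every $l\in\cN_{w'}^-\cup\{w'\}$. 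Since a node at distance $t+1$ from $i$ is an in-neighbor of some node at distance $t$, the union of these in-neighborhoods is precisely the distance-$(t+1)$ ball, which gives $P(t+1)$. Because $\cG$ is strongly connected with diameter $D$ (Assumption~\ref{connect}), every $w\in\cV$ satisfies $d_{iw}\le D$, so $P(D)$ yields $y_w^\epsilon(k-D)\ge 1$ for all $w\in\cV$.

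Finally I would invoke Lemma~\ref{iter_prop}(c): $y_w^\epsilon(k-D)\ge 1$ for all $w$ means the network is in uniformly local $\epsilon$-consensus at time slot $k-D-1$, whence Lemma~\ref{equivalence} gives global $(\epsilon D)$-consensus there, i.e.\ $d(k-D-1)<\epsilon D$; the monotonicity $d(k)\le d(k-D-1)$ then delivers global $(\epsilon D)$-consensus at time slot $k$. The only real obstacle is the bookkeeping inside the induction---matching the one-step decrement of the time index to each unit of graph distance, and checking that the first edge of a shortest path from $w$ to $i$ indeed realizes an in-neighbor relation under the paper's orientation convention; once these are pinned down, the arithmetic $D+1-t$ and the coverage of $\cV$ at $t=D$ follow automatically.
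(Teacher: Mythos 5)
Your proposal is correct and follows essentially the same route as the paper's own proof: back-propagate the bound $z_i^\epsilon(k)\ge D+1$ through the minimum in (\ref{auxz}) at one graph-hop per time step to conclude $y_j^\epsilon(k-D)\ge 1$ for all $j$, then apply Lemma \ref{iter_prop}(c), Lemma \ref{equivalence}, and the attractiveness of global $\epsilon$-consensus. The only (cosmetic) difference is that you organize the propagation as an induction over distance balls around $i$, whereas the paper traces a shortest path to each node $j$ separately and uses Lemma \ref{iter_prop}(a) to align the time indices; your version also makes explicit the check $k\ge D+1$ that the paper leaves implicit.
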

\begin{proof} For any $j\neq i\in\mathcal{V}$,  it follows from  Assumption \ref{connect} that there exists a directed path $(i,i_1),(i_1,i_2),\ldots,(i_m,j)$ from node $j$ to $i$ with $m\leq D-1$. Since $z_i(k)\geq D+1$ and $i_1\in\mathcal{N}_i^-$, it follows from Lemma \ref{iter_prop}(a) and (b) that $z_{i_1}^\epsilon (k-1)\geq D$, and $y_{i_1}^\epsilon(k-1)\geq D$. Similarly, it follows that $z_{i_2}^{\epsilon}(k-2)\geq D-1$ and $y_{i_2}^\epsilon(k-2)\geq D-1$. Repeat the same steps, we obtain that $y_j^\epsilon(k-m-1)\geq D-m$. Note that $m+1\leq D$, it follows from Lemma \ref{iter_prop}(a) that $y_j^\epsilon(k- D)\geq y_j^\epsilon(k-m-1)+m+1- D\ge1$. As $j$ is arbitrary, it follows from Lemma \ref{iter_prop}(c) that the system is in uniformly local $\epsilon$-consensus at time $k- D-1$. By Lemma \ref{equivalence}, the system has already attained global $(\epsilon D)$-consensus at time slot $k- D-1$. Since the state of global $(\epsilon D)$-consensus is attractive, it is obvious that the network attains  global $(\epsilon D)$-consensus at time slot $k$.
\end{proof}

If the diameter $D$ of the network is unknown, we may replace it by using the network size $N-1$ and obtain the following result.

\begin{cor} Consider the distributed algorithms given by (\ref{af}) and (\ref{auxz}) under Assumption \ref{connect}. If there exists $i\in\mathcal{V}$ such that $z_i^\epsilon (k) \ge  N$,  then the network attains global $(\epsilon (N-1))$-consensus at time slot $k$ from any initial network state. 
\end{cor}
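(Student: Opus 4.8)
The plan is to obtain this corollary as an immediate consequence of Theorem \ref{thm1}, using only the elementary graph-theoretic fact that a strongly connected digraph on $N$ nodes has diameter $D\le N-1$. The strategy deliberately avoids re-running the inductive argument of Theorem \ref{thm1}: instead I would invoke that theorem verbatim and then relax the resulting consensus gap from $\epsilon D$ to $\epsilon(N-1)$.

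First I would establish $D\le N-1$. Under Assumption \ref{connect}, for any two distinct nodes there is a directed path, and a \emph{shortest} such path must be simple, since if it repeated a vertex one could excise the intervening cycle and obtain a strictly shorter path, contradicting minimality. A simple path visits at most $N$ distinct vertices and hence uses at most $N-1$ edges, so $d_{ij}\le N-1$ for all $i,j\in\cV$, giving $D=\max_{i,j}d_{ij}\le N-1$. Combining this with the hypothesis, since $z_i^\epsilon(k)\ge N$ and $N\ge D+1$, the node $i$ satisfies $z_i^\epsilon(k)\ge D+1$, which is exactly the hypothesis of Theorem \ref{thm1}. Applying that theorem yields that the network attains global $(\epsilon D)$-consensus at time slot $k$, that is, $|x_i(k)-x_j(k)|<\epsilon D$ for all $i,j\in\cV$.

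Finally I would relax the gap: because $D\le N-1$ we have $\epsilon D\le\epsilon(N-1)$, so $|x_i(k)-x_j(k)|<\epsilon D\le\epsilon(N-1)$ for every pair, which is precisely global $(\epsilon(N-1))$-consensus. I do not expect a genuine obstacle here; the result is structurally a reduction, and the only point warranting an explicit line of justification is that a shortest path in a strongly connected digraph is simple (hence bounded in length by $N-1$). Everything else is a direct citation of Theorem \ref{thm1} together with the monotonicity of the global $\epsilon$-consensus notion in its gap parameter.
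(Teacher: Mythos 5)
Your proposal is correct and matches the route the paper intends: the corollary is stated immediately after the remark that the unknown diameter $D$ may be replaced by $N-1$, so the intended (unwritten) proof is exactly your reduction via $D\le N-1$, the hypothesis $z_i^\epsilon(k)\ge N\ge D+1$ of Theorem \ref{thm1}, and the monotonicity of the consensus gap. No gaps; the one step you flag for justification (shortest paths in a strongly connected digraph are simple, hence of length at most $N-1$) is handled correctly.
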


Clearly, it is almost impossible for all nodes in the network  to simultaneously detect the attainability of global $\epsilon$-consensus.  In practice, once the node $i$ locally determines the attainability of global $\epsilon$-consensus, it stops updating. In addition,  this node will inform its out-neighbors to stop updating and there is no longer need to broadcast its state. Thus, it maximally takes an additional $D$ time slots for other nodes to detect  the attainability of global $\epsilon$-consensus. 

Overall, the local stopping method for node $i$ is detailed in Algorithm \ref{alg1}.

\begin{algorithm}
  \caption{Local stopping method for uniformly local $\epsilon$-consensus}
  \label{alg1}
  \begin{algorithmic}\renewcommand{\algorithmicrequire}{ \textbf{Initialization:}}
 \REQUIRE
  For each node $i\in\mathcal{V}$, let $y_i^{\epsilon} \gets 0$, $z_i^{\epsilon}\gets 0$.
   \WHILE {true}
  \renewcommand{\algorithmicensure}{\quad\textbf{\quad Local information exchange:}}
  \ENSURE Every node $i$ broadcasts its state $x_i$, and $\min\{y_i^{\epsilon}, z_i^{\epsilon}\}$ to its out-neighbors.
  \renewcommand{\algorithmicensure}{\quad\textbf{\quad Local variables update:}}
  \ENSURE Each node $i$ receives $x_j$ and $\min\{y_j^{\epsilon}, z_j^{\epsilon}\}$ from its in-neighbors, i.e., $j\in\cN_i^{-}$, and let
  \STATE $z_i^{\epsilon}\gets \min_{j\in\mathcal{N}_i^-\cup\{i\}}\{y_j^{\epsilon}, z_j^{\epsilon}\}+1$.
  \IF {$z^{\epsilon}_i\geq D+1$}
  \STATE node $i$ reports uniformly local $\epsilon$-consensus and stops updating.
  \STATE \textbf{break}
  \ENDIF
 \IF{$\max_{j\in\mathcal{N}_i^-}|x_i-x_j|<\epsilon$}
  \STATE $y_i^{\epsilon}\gets y_i^{\epsilon}+1.$
  \ELSE
  \STATE $y_i^{\epsilon}\gets 0.$
  \ENDIF
  \ENDWHILE
  \end{algorithmic}
\end{algorithm}
\begin{rem} Given any $\delta>0$, let $\epsilon=\delta/D$ in Algorithm \ref{alg1}. Then, we are able to obtain global $\delta$-consensus by using the stopping method in Algorithm \ref{alg1} .  
\end{rem}
\section{Sensitivity Analysis}
\label{simulation}
Theorem \ref{thm1} confirms feasibility of Algorithm \ref{alg1} for locally detecting both uniformly local and global consensus. If the network attains uniformly local or global $\epsilon$-consensus, a node usually is unable to locally detect it immediately due to the lack of the whole network state. However, the detection task can be completed by every node under Algorithm \ref{alg1} after a finite number of extra time slots, which is defined as {\em response time} in this work. Clearly, a short response time is desirable in applications, and is zero if the node is able to monitor the whole network state simultaneously.  In this section, we are interested in evaluating the response time of Algorithm \ref{alg1}.  To this purpose, we use adopt the concept of {\em ergodic coefficient}. 

\begin{defi}[Ergodic coefficient] Given a nonnegative matrix $A=[a_{ij}]\in\mathbb{R}^{N\times N}$, its ergodicity coefficient is defined as 
\begin{equation}
\tau(A)=\min_{i\neq j}\sum_{k=1}^N\min\{a_{ik},a_{jk}\}.
\end{equation}
\end{defi}
For a row-stochastic matrix $A$, it clearly hods that $0\le \tau(A)\le 1$, and ergodic coefficient is important to quantify the convergence rate of consensus algorithm in (\ref{systema}). Since $x(k+j)=A^j x(k)$ for any $k, j\in\bN$, it follows from \cite{dong2016flocking} that 
\begin{equation*}
d(k+j)\leq(1-\tau(A^j))d(k),
\end{equation*}
where $d(k)=\max_{i,j}\{|x_i(k)-x_j(k)|\}$. 
By \cite{dong2016flocking}, we obtain the following lemma. 

\begin{lem}\label{coeff1} Under Assumption \ref{connect}, there exists a minimum $h\in\bN$ such that $0<\tau(A^h)<1$
and $h\le D$. 
\end{lem}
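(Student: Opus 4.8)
The plan is to reduce the statement to two monotone facts about the ergodic coefficient: that $h\mapsto\tau(A^h)$ is nondecreasing, and that it is strictly positive yet strictly below $1$ at some index no larger than $D$. The convenient object is the complementary (Dobrushin) coefficient $1-\tau(A)=\max_{i\neq j}\frac12\sum_{k}|a_{ik}-a_{jk}|$, which for row-stochastic matrices is submultiplicative: $1-\tau(AB)\le(1-\tau(A))(1-\tau(B))$. Applying this with $B=A^h$ gives $1-\tau(A^{h+1})\le(1-\tau(A))(1-\tau(A^h))\le 1-\tau(A^h)$, so $\tau(A^{h+1})\ge\tau(A^h)$; hence $\tau(A^h)$ never decreases in $h$, and it suffices to exhibit a single index $h\le D$ with $0<\tau(A^h)<1$, after which the existence of a \emph{minimum} such $h$ (and the bound $h\le D$) is automatic by well-ordering of $\bN$.

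For positivity I would use Assumption~\ref{connect} together with the fact that $A$ has a strictly positive diagonal (since $i\in\mathcal{N}_i^-\cup\{i\}$, so $\alpha_{ii}>0$). Recall that $\tau(B)>0$ exactly when every pair of rows of $B$ shares a column in which both entries are positive. Because $\cG$ is strongly connected, every ordered pair $(i,k)$ is joined by a directed path of length at most $D$ along the support of $A$, and the self-loops let any such path be padded to a walk of length exactly $h$ for every $h$ at least that length; consequently $(A^D)_{ik}>0$ for all $i,k$, i.e.\ $A^D$ is entrywise positive. Then trivially $\tau(A^D)>0$, and by monotonicity the first index $h_0$ at which $\tau(A^{h_0})>0$ is well defined and satisfies $h_0\le D$.

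It remains to secure the strict upper bound $\tau(A^{h_0})<1$, and this is where the real care is needed. Since $\tau(B)=1$ holds \emph{iff} all rows of $B$ are identical, i.e.\ $B=1_Nv^{T}$, I would argue that no finite power of $A$ can equal $1_Nv^{T}$: the matrix $A$ is primitive with a simple eigenvalue $1$, and as established earlier consensus is reached only asymptotically ($A^k\to 1_Nv^{T}$ without equality at finite $k$), so every finite power has at least two distinct rows and therefore $\tau(A^h)<1$ for all $h$. Combined with $h_0\le D$ this yields $0<\tau(A^{h_0})<1$, proving the claim. The main obstacle is precisely this upper inequality at the \emph{first} index where the supports overlap: if the subdominant eigenvalues of $A$ were to vanish (finite-time exact consensus, e.g.\ $A=\frac1N 1_N1_N^{T}$, where $\tau(A^h)=1$ for every $h$), one could have $\tau(A^{h_0})=1$ already and no admissible $h$ would exist, so ruling out this boundary case is exactly what forces reliance on the asymptotic—rather than finite-time—nature of the convergence assumed throughout. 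The same submultiplicativity then guarantees the contraction factor $1-\tau(A^h)$ appearing in $d(k+h)\le(1-\tau(A^h))d(k)$ lies genuinely in $(0,1)$.
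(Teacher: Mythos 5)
The paper offers no proof of this lemma at all---it is simply attributed to the reference \cite{dong2016flocking}---so your self-contained argument is already doing more work than the text. Your treatment of the lower bound is correct and standard: the positive diagonal ($\alpha_{ii}>0$ since $i\in\mathcal{N}_i^-\cup\{i\}$) lets you pad any directed path to a walk of length exactly $D$, so $A^D$ is entrywise positive and $\tau(A^D)>0$; the submultiplicativity of $1-\tau(\cdot)$ and the well-ordering step are likewise fine and give the existence of a minimum index $h\le D$.

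The genuine gap is in the upper bound $\tau(A^{h_0})<1$. You assert that no finite power of $A$ can equal $1_Nv^{T}$ because ``$A$ is primitive with a simple eigenvalue $1$'' and because ``consensus is reached only asymptotically.'' Neither premise delivers the conclusion: convergence $A^k\to 1_Nv^{T}$ is perfectly compatible with $A^k=1_Nv^{T}$ at some finite $k$, and primitivity plus simplicity of the eigenvalue $1$ do not exclude it either---your own example $A=\frac{1}{N}1_N1_N^{T}$ is primitive, has simple eigenvalue $1$, satisfies Assumption \ref{connect}, and yet has $\tau(A^h)=1$ for every $h$. So the step is not a proof but a restatement of the missing hypothesis; as you candidly note, the lemma as stated is actually false in that boundary case, and what is needed is an explicit additional assumption (e.g.\ that $A$ has a nonzero subdominant eigenvalue, or that some power $A^h$ has two distinct rows). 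You should either add that hypothesis and cite it when deducing $\tau(A^{h_0})<1$, or observe that the downstream use in Theorem \ref{resbound} degenerates gracefully when $\tau(A^h)=1$ (exact consensus in $h$ steps, so the response-time bound only improves), rather than presenting the exclusion of this case as something already ``established earlier.''
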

\begin{thm}\label{resbound} Under Assumption \ref{connect}, consider the local stopping method in Algorithm \ref{alg1}. Given any $\delta>0$, the response time $T_r$ for achieving global $\delta$-consensus  is upper bounded by
\begin{equation}\label{response}
T_r\leq h\lceil\frac{-\log~D}{\log~(1-\tau(A^h))}\rceil+D+1,
\end{equation}
where $h$ is given in Lemma \ref{coeff1}.
\end{thm}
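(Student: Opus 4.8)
The plan is to split the response time into two clearly separated phases and to measure it as the number of extra slots between the instant $T_\delta$ at which global $\delta$-consensus is genuinely attained and the instant at which every node's counter $z_i^\epsilon$ crosses its reporting threshold $D+1$. Throughout I fix $\epsilon=\delta/D$ as in the Remark, so that by Lemma~\ref{equivalence} and Theorem~\ref{thm1} the condition $z_i^\epsilon\ge D+1$ certifies global $(\epsilon D)$-consensus, which is exactly global $\delta$-consensus. Under Assumption~\ref{connect} consensus is reached asymptotically, so $d(k)=\max_{i,j}|x_i(k)-x_j(k)|\to0$ and $T_\delta$, the first slot with $d(k)<\delta$, is finite.

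First I would control the first phase: how long the spread $d(k)$ needs to shrink from the coarse level $\delta$ down to the fine level $\epsilon=\delta/D$. At $T_\delta$ the definition of global $\delta$-consensus gives $d(T_\delta)<\delta$. Iterating the contraction $d(k+h)\le(1-\tau(A^h))d(k)$ of Lemma~\ref{coeff1} a total of $m$ times yields $d(T_\delta+mh)\le(1-\tau(A^h))^m d(T_\delta)<(1-\tau(A^h))^m\delta$. Because $0<\tau(A^h)<1$, the choice $m=\lceil\frac{-\log D}{\log(1-\tau(A^h))}\rceil$ is a well-defined nonnegative integer forcing $(1-\tau(A^h))^m\le 1/D$, hence $d(T_g)<\epsilon$ at $T_g:=T_\delta+mh$; that is, global $\epsilon$-consensus holds at $T_g$. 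Since $d(k)$ is non-increasing (the attractiveness already noted for global consensus), global $\epsilon$-consensus, and therefore local $\epsilon$-consensus at \emph{every} node, persists for all $k\ge T_g$.

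The core of the argument is the second phase: once local $\epsilon$-consensus holds at all nodes from $T_g$ on, the counter $z_i^\epsilon$ must climb at unit rate. Persistence forces the branch $y_i^\epsilon(k+1)=y_i^\epsilon(k)+1$ of (\ref{af}) for every $i$ and every $k\ge T_g$, so $y_i^\epsilon(T_g+s)\ge s$ for all $s\ge0$. I would then prove by induction on $s$ that $z_i^\epsilon(T_g+s)\ge s$ for all $i$: the base case $s=0$ is just $z_i^\epsilon\ge0$, and the step follows from (\ref{auxz}) since $z_i^\epsilon(T_g+s+1)\ge\min_{j}\{z_j^\epsilon(T_g+s),y_j^\epsilon(T_g+s)\}+1\ge\min\{s,s\}+1=s+1$. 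Taking $s=D+1$ gives $z_i^\epsilon(T_g+D+1)\ge D+1$ for every node, so the reporting condition fires network-wide by time $T_g+D+1$.

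Collecting both phases, every node triggers detection no later than $T_g+D+1=T_\delta+mh+D+1$, hence $T_r\le mh+D+1=h\lceil\frac{-\log D}{\log(1-\tau(A^h))}\rceil+D+1$, which is (\ref{response}). The main obstacle is the coupled induction of the third paragraph: it closes only because monotonicity of $d(k)$ guarantees that uniformly local $\epsilon$-consensus, once attained, never lapses, so that $y_i^\epsilon$ keeps feeding an ever-growing floor into the min-update of $z_i^\epsilon$; a secondary point needing care is matching the geometric contraction factor to the exact ratio $\epsilon/\delta=1/D$ so that precisely $m$ blocks of $h$ slots suffice.
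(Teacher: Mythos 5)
Your proof is correct and follows essentially the same two-phase decomposition as the paper: use the ergodic-coefficient contraction to pass from global $\delta$-consensus to uniformly local $(\delta/D)$-consensus in $h\lceil -\log D/\log(1-\tau(A^h))\rceil$ slots, then add $D+1$ slots for local detection. The only difference is that you explicitly prove the detection phase (via the induction $y_i^\epsilon(T_g+s)\ge s$ and $z_i^\epsilon(T_g+s)\ge s$), a step the paper merely asserts; this is a welcome addition rather than a departure.
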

\begin{proof} Under Assumption \ref{connect}, it is obvious that the network state dictated in (\ref{systema}) will asymptotically reach consensus. Then,  there exists a finite time slot $k_0>0$ such that the network will firstly reach global $\delta$-consensus. Clearly, $k_0$ depends on both the network structure in the form of $A$ and the initial network state. Thus, it is generically unavailable to an individual node. 

If the network state attains the uniformly local $(\delta/D)$-consensus, each node is able to locally detect it before time slot $D+1$ by using Algorithm \ref{alg1}.  In view of Lemma \ref{equivalence}, the network certainly reaches global $\delta$-consensus. 

Since $x(k_0+lh)=A^hx(k_0+(l-1)h)$, it follows from Lemma \ref{coeff1}  that 
\begin{eqnarray}
d(k_0+lh)&\leq&(1-\tau(A^h))d(k_0+(l-1)h)\nonumber\\
&\leq&(1-\tau(A^h))^l \delta.\nonumber
\end{eqnarray}

Let $l$ be selected such that $(1-\tau(A^h))^l \delta\le \delta/D$, i.e, $$l= \lceil\frac{-\log~D}{\log~(1-\tau(A^h))}\rceil,$$ the network reaches uniformly local $(\delta/D)$-consensus at time slot time $k_0+l h$. 

Finally, we obtain that $T_r\le lh +D+1$, which is explicitly given by
\begin{equation*}
T_r\leq h\lceil\frac{-\log~D}{\log~(1-\tau(A^h))}\rceil+D+1.
\end{equation*}
That is, the proof is completed. 
\end{proof}

\begin{rem} In (\ref{response}),  the first quantity in the right hand side (RHS) is applied to overcome the conservativeness between global $\delta$-consensus and uniformly local $\delta/D$-consensus.  The additional $D+1$ time slots are for an individual node to detect the attainability of uniformly local $\delta/D$-consensus. 
\end{rem}

The following example illustrates the sensitivity of the stopping method via the response time.

\begin{exmp}\label{exmp1} Consider the consensus algorithm (\ref{systema}) with
\begin{equation*}
A=\left[\begin{matrix}0.933&0.067&0&0\\
     0&0.722&0.129&0.149\\
     0&0&0.633&0.367\\
     0.111&0&0&0.889
\end{matrix}\right],~\text{and}~x(0)=\left[\begin{matrix}10\\7\\4\\0\end{matrix}\right].
\end{equation*}
\end{exmp}
In this example, the goal is to achieve global $0.5$-consensus. The simulation result in Fig. \ref{fig:exmp1} shows that the first time when stopping condition satisfies in some node is $k=24$, while the global $\epsilon$-consensus actually arrives at $k=18$. The result clearly illustrates the effectiveness of Algorithm \ref{alg1} since each node stops, locally. 
\begin{figure}[htbp]
  \centering
  \includegraphics[width=0.51\textwidth]{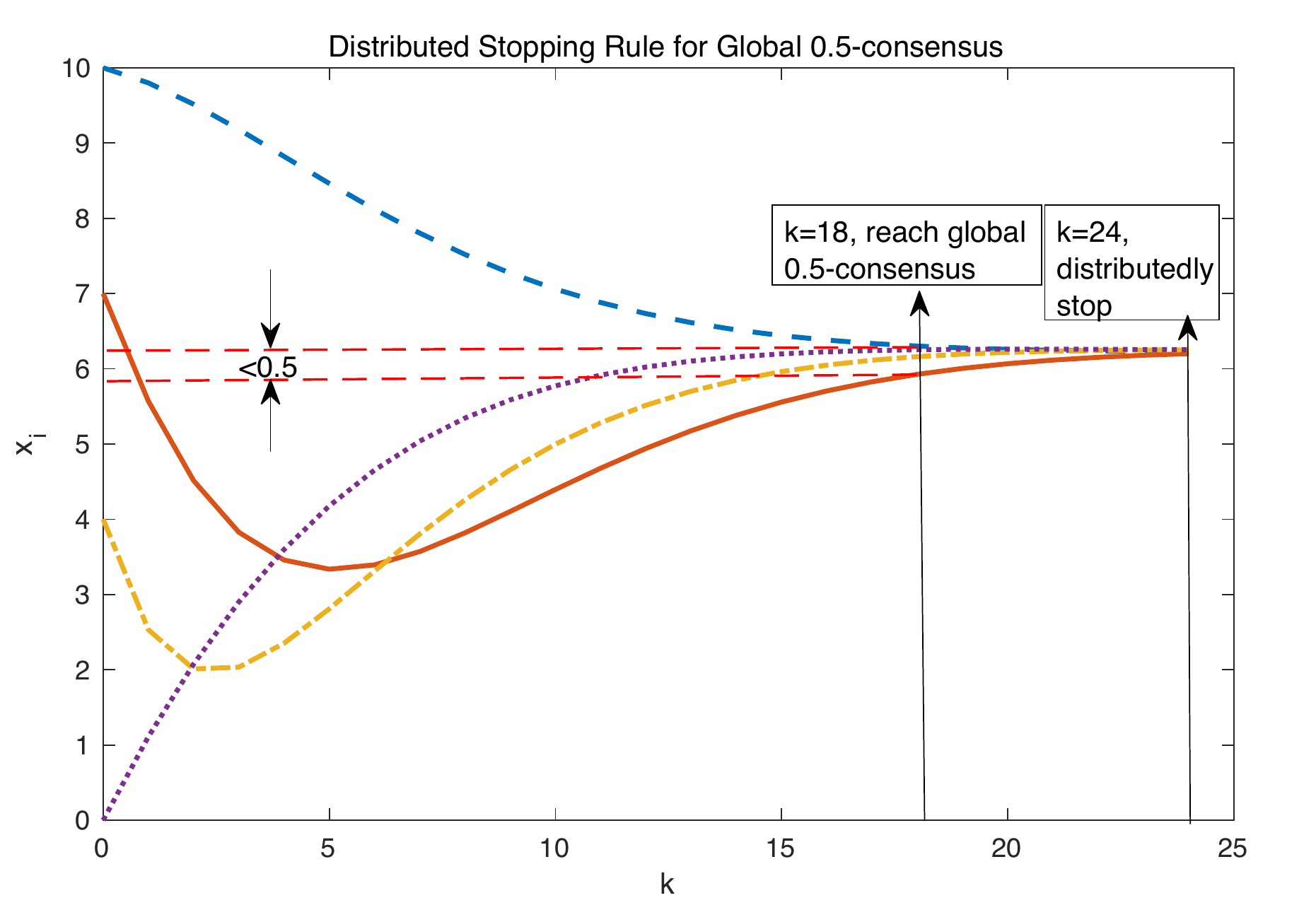}
  \caption{Simulation result of Example 1 as $\epsilon=0.5$.}\label{fig:exmp1}
\end{figure}

The response time for different levels of global consensus are listed in Table.1. Via calculation, we obtain $h=2$, $D=3$, and $\tau(A^h)=0.0594$. It follows from Theorem \ref{resbound} that $T_r\le 40$. As shown in Table.1, the largest response time is $10$, which is smaller than $40$ and verifies the correctness of Theorem \ref{resbound}.

\begin{table}[t!]
\label{table1}
\centering
\textbf{Table 1:} Response time for different levels of global consensus
\begin{tabular}{cccc}
\hline
Level of consensus  &Consensus time&Stopping time& Response time \\
\hline
1&15&23&8\\
0.5&18&24&6\\
0.1&23&30&7\\
0.01&34&44&10\\
0.001&44&51&7\\
\hline
\end{tabular}
\end{table} 

\section{Conclusion}
\label{conclusion}
In this paper, we formally defined the local stopping problem   for distributed algorithms. Focusing on the consensus algorithm, a distributed stopping method was designed for each nodes to locally check consensus, and its sensitivity is quantified. The method can be extended to time-varying network and is robust to communication delays, which will be presented in the journal version.

\bibliographystyle{IEEEtrans}
\bibliography{mybibf}
\end{document}